\documentclass[a4paper,12pt,reqno,superscriptaddress]{revtex4}

\usepackage[centertags]{amsmath}
\usepackage{amsfonts}
\usepackage{amssymb}
\usepackage{amsthm}
\usepackage{newlfont}
\usepackage{stmaryrd}
\usepackage{mathrsfs}
\usepackage{euscript}

\usepackage{color}


\theoremstyle{plain}
  \newtheorem{theorem}{Theorem}[section]

\theoremstyle{definition}

\theoremstyle{remark}
  \newtheorem{remark}[theorem]{Remark}

\numberwithin{equation}{section}



\let\al=\alpha \let\be=\beta \let\de=\delta 
\let\ve=\varepsilon  \let\ga=\gamma 
\let\ka=\kappa \let\la=\lambda \let\om=\omega 
\let\si=\sigma

   \let\Om=\Omega


\newcommand{\caP}{{\mathcal P}}

\newcommand{\caR}{{\mathcal R}}

\newcommand{\caT}{{\mathcal T}}


\newcommand{\opunit}{\text{1}\kern-0.22em\text{l}}
\newcommand{\funit}{\mathbf{1}}


\newcommand{\frj}{{\mathfrak j}}


\DeclareMathAlphabet{\mathpzc}{OT1}{pzc}{m}{it}



\newcommand{\id}{\textrm{d}}

\newcommand{\irr}{_{\text{IRR}}}


\begin{document}

\title{Rigorous meaning of McLennan ensembles}

\author{Christian Maes}
\affiliation{Instituut voor Theoretische Fysica, K.U.Leuven,
Belgium} \email{christian.maes@fys.kuleuven.be}
\author{Karel Neto\v{c}n\'{y}}
\affiliation{Institute of Physics AS CR, Prague, Czech Republic}
\email{netocny@fzu.cz}

\keywords{steady state, nonequilibrium ensembles}

\begin{abstract}
We analyze the exact meaning of expressions for nonequilibrium
stationary distributions in terms of entropy changes. They were
originally introduced by McLennan for mechanical systems close to
equilibrium and more recent work by Komatsu and Nakagawa has shown
their intimate relation to the transient fluctuation symmetry.
Here we derive these distributions for jump and diffusion Markov
processes and we clarify the order of the limits that take the
system both to its stationary regime and to the
close-to-equilibrium regime. In particular, we prove that it is
exactly the (finite) transient component of the irreversible part
of the entropy flux that corrects the Boltzmann distribution to
first order in the driving. We add further connections with the
notion of local equilibrium, with the Green-Kubo relation and with
a generalized expression for the stationary distribution in terms
of a reference equilibrium process.
\end{abstract}

\maketitle

\section{Introduction}
According to McLennan~\cite{mac,mac1}, the stationary density of
an open mechanical system away but close to thermal equilibrium
can be written in the modified Gibbs form
\begin{equation}\label{mac}
  \rho(x) \simeq \frac{1}{Z}\,e^{-\be H(x) + W(x)}
\end{equation}
with the nonequilibrium correction $W$ directly related to the
entropy production or to the dissipation in the driven system.
An essential feature of the formula, not quite visible yet, is
that the distribution $\rho$ is described in terms of macroscopic
parameters only, such as external temperature and driving fields.
It was also expected ``that some formal advantages may be offered
by an approach to nonequilibrium phenomena in which the Gibbs
ensemble plays a more prominent role'' (from the second paragraph
in \cite{mac}). Because of the suggested physical interpretation,
this proposal opens the possibility to construct nonequilibrium
statistical ensembles based on meaningful physical quantities,
see also \cite{LB,leb} for older and \cite{TM} for more recent work.
However, \eqref{mac} in \cite{mac} being just the
result of a formal perturbation calculation (together with some projection techniques), there have remained a number of difficulties with the exact meaning of this proposal as well as with its scope of generality. We mention some of these problems, as are clarified in the present paper:

(1) What entropy production does the correction term $W$
represent?---It can only be some transient component of the total
entropy production as the total entropy production clearly
diverges in the long-time (stationary) limit. Moreover, this
divergence (equal to the steady entropy flux) is in fact of order
$O(\ve^2)$ in some `distance of equilibrium' $\ve$ since it comes
from the product of thermodynamical forces and fluxes, both being
$O(\ve)$. The point will be that the transient irreversible part
of the entropy production and its linear part are both finite:
they coincide up to $O(\ve)$ and give a valid first-order
correction to the Boltzmann distribution as in~\eqref{mac}.

(2) Is the proposal also valid on different levels of description
than for mechanical systems?---The formal perturbation approach as
in~\cite{mac} does not reveal the essence and the physical
generality of the proposal.  Here the insight comes from dynamical
fluctuation theory: formula \eqref{mac} is basically a consequence
of the transient fluctuation symmetry, \cite{KN}. In other words,
it follows from the local detailed balance assumption which
determines the time-antisymmetric structure of the space-time
distribution in terms of the  history-dependent entropy fluxes,
cf. \cite{ls,mn}.  We make that visible for Markov processes.

(3) Can one go beyond close-to-equilibrium?---As explained
in~\cite{KN} and further applied in~\cite{sa,ha}, one can in
principle obtain a formal perturbation series for the stationary
distribution based on (all) the cumulants of the transient entropy
production. An interpretation has been given for the second-order expansion
where the divergences cancel out by a different way than explained in the present   paper. We instead present a generalization involving the so called dynamical activity or traffic, an advantage being that the evaluation is now from the start to be done under a reference equilibrium process.\\

We consider Markov processes of two types, jump and diffusion
processes.  Yet for simplicity we reserve the next section to
Markov jump processes; the diffusion case is formally completely
similar. We explain the relation with  local equilibrium in
Section \ref{leq}. In Section \ref{dif} we make some specific
remarks on the diffusion case and we also take there the
opportunity to illustrate an alternative to the derivation in
Section \ref{der}.  As an application, the Green-Kubo relations
are derived in Section \ref{gk}.  An illustration of the
McLennan-algorithm for an underdamped case is given in Section
\ref{odd}. We end, in Section \ref{max}, with a generalization
away from equilibrium.

\section{Markov jump processes}

After introducing some notation and basic concepts in the case of
jump processes, we give our main result which is a rigorous
version of the McLennan formula. A comparison to another approach
and further remarks are added.

\subsection{Set-up and assumptions}\label{s1}

Consider a continuous time Markov proces $x_t, t\geq 0$, taking
values in a finite state space $\Omega \ni x,y,\ldots$.  The
transition rates are $\lambda(x,y)\geq 0$ for jumps between the
states $x\rightarrow y$.  The Master equation for the probability
$\mu_t(x)$ of state $x$ as function of time $t$ is
\begin{equation}\label{timev}
  \frac{\id \mu_t(x)}{\id t} =
  \sum_{y \neq x} \{\mu_t(y) \la(y,x) - \mu_t(x) \la(x,y)\}
\end{equation}
with some given initial law $\mu_0=\mu$ at time zero.

Physical input distinguishes between equilibrium and
nonequilibrium dynamics. An equilibrium dynamics (with subscript
$0$) satisfies the condition of detailed balance, i.e.,
\begin{equation}\label{db}
\rho_0(x)\, \lambda_0(x,y) = \rho_0(y)\,\lambda_0(y,x)
\end{equation}
where $\rho_0(x) \propto e^{-\beta U(x)}$ for some potential $U$
and inverse temperature $\beta$, is then stationary. This relation
expresses the time-reversibility of the stationary equilibrium
process.  For nonequilibrium systems, detailed balance \eqref{db}
gets broken. An extension is known as the condition of local
detailed balance. In terms of a potential $U(x)$ and a work
function (or driving) $F(x,y) = -F(y,x)$, the rates now satisfy
\begin{equation}\label{ldb}
\lambda(x,y) = e^{\beta\,[F(x,y)+U(x)-U(y)]}\,\lambda(y,x)
\end{equation}
where $\beta\geq 0$ can still be interpreted as the inverse
temperature of a reference reservoir, but that is not necessary
except for setting the right units.

We can rewrite the local detailed balance condition \eqref{ldb} as
\begin{equation}\label{transrates}
\rho_0(x)\lambda(x,y) = \gamma(x,y)\,e^{\frac{\beta}{2}F(x,y)}
\end{equation}
for a symmetric $\gamma(x,y) = \gamma(y,x)$, which here is
arbitrary.

In the present paper we are concerned with the
close-to-equilibrium regime where $F$ is small.  To make it
precise, we parameterize the distance to equilibrium explicitly by
assuming that $\ga(x,y) = \ga_\ve(x,y)$ and $F(x,y) = F_\ve(x,y)$
(and hence also $\la(x,y) = \la_\ve(x,y)$) depend on a parameter
$\ve \in [0,\ve_0]$. With no loss of generality we let
$F_\ve(x,y) = \varepsilon F_1(x,y)$. As becomes obvious later,
the $\ve-$dependence of $\ga_\ve(x,y)$ is irrelevant for the first-order calculations.\\

We also consider the probability of trajectories, or rather, how
to obtain probability densities in path-space.  For this we start
with a probability law $\mu$ at time zero for the Markov process,
and write $\cal{P}_{\mu}$ for its  path-space distribution over a
time interval $[0,T]$. That has a density with respect to the
corresponding stationary equilibrium process $\cal{P}_0$ with
rates $\lambda_0(x,y)$ and starting from $\rho_0$, explicitly
given by the Girsanov formula
\begin{equation}\label{girs}
\frac{\id\caP_{\mu}}{\id \caP_0}(\omega) = \frac{\mu(x_0)}{
\rho_0(x_0)}\exp \Bigl\{ -\int_0^T \, \bigl(\xi(x_t) -  \xi_0(x_t)
\bigl)\,\id t  + \sum_{0< t\leq T}\log
\frac{\lambda(x_{t^-},x_{t})}{{\lambda_0}(x_{t^-},x_{t})} \Bigr\}
\end{equation}
where $\omega= (x_t)_{t=0}^T$, $x_t\in \Omega,$ is a piecewise
constant right-continuous trajectory, with the escape rates
\[ \xi(x) = \sum_{y \neq x} \lambda(x,y)\]
and with the last sum in the exponent being over the jump times
$t$ where the state changes from $x_{t^-}$ to $x_t$. Mathematical
details are found in, e.g., Appendix~2 of~\cite{KL}.

From \eqref{girs}, the path-space action $A$ in
\begin{equation}\label{girsan}
\id\cal{P}_\mu(\omega) =
\id\cal{P}_0(\omega)\,\frac{\mu(x_0)}{\rho_0(x_0)}\,e^{-A(\omega)}
\end{equation}
equals
\[
A(\omega) = \exp \Bigl\{ \int_0^T \, \bigl(\xi(x_t) -  \xi_0(x_t)
\bigl)\,\id t  - \sum_{0<t\leq T}\log
\frac{\lambda(x_{t^-},x_{t})}{{\lambda_0}(x_{t^-},x_{t})} \Bigr\}
\]
As a result, its time-antisymmetric part is
\begin{eqnarray}\label{sir}
S_{\irr}^T(\omega) &=& A(\theta\omega) - A(\omega)\nonumber\\
&=& \beta \sum_{0<t\leq T} F(x_{t^-},x_{t})
\end{eqnarray}
where the time-reversal is the right-continuous modification of
$\theta \omega = (x_{T-t})_{t=0}^T$ for any $\omega =
(x_t)_{t=0}^T$. We have used that the first integral in the
exponent of~\eqref{girs} is time-symmetric, and that
\eqref{db}--\eqref{transrates} combine to produce the forcing in
the sum over jump times. We recognize the resulting
$S^T_{\irr}(\omega)$ as the `irreversible' part in the entropy
flux as function of the path $\omega$. Note that the pathwise
relation~\eqref{sir} between the time-reversal symmetry breaking
and the entropy flux is a consequence of  condition~\eqref{ldb}.
For
general arguments see e.g.~\cite{jmp2}.\\

The mean value of that irreversible part of the entropy production
is obtained by taking the average of~\eqref{sir} with respect to
our process, using its Markov property: 
\begin{equation}\label{sout}
\begin{split}
  \langle S_{\irr}^T \rangle_\mu &=
  \int\id\cal{P}_\mu(\omega)\,S_{\irr}^T(\omega) = \int_0^T \id t\,
  \lim_{\tau \downarrow 0} \frac{1}{\tau}
  \langle S_{\irr}^\tau \rangle_{\mu_t}
\\
  &=  \be\int_0^T \id t\, \sum_{x} \mu_t(x)
  \sum_{y \neq x} \la_{\varepsilon}(x,y)\, F_\ve(x,y)
  \\
  &=  \be\int_0^T \id t\, \Bigl\langle
  \sum_{y \neq x} \la_\varepsilon(x_t,y)\, F_\ve(x_t,y)\Bigr\rangle_{\mu}
\end{split}
\end{equation}
Hence, for fixed $T$ we have to first order in $\varepsilon$, 
\begin{equation}\label{w10}
  \langle S_{\irr}^T \rangle_\mu = \varepsilon\be\,\int_0^T \id t\,
  \bigl\langle w_1(x_t)\bigr\rangle_{\mu}^0 + O(\varepsilon^2)
\end{equation}
where the averaging $\langle\cdot\rangle^0_{\mu}$ is now over the
equilibrium reference process started from $\mu$, and 
\begin{equation}\label{w1}
  w_1(x) = \sum_{y \neq x} \la_0(x,y)\,F_1(x,y)
\end{equation}
is the linear term in the mean entropy flux when at state $x$.

\subsection{McLennan formula}\label{der}

To be explicit about the various dependencies, we write
$\rho_T^\varepsilon$ for the $\varepsilon-$dependent solution at
time $T$ to the Master equation \eqref{timev}, started from the
equilibrium law $\mu_0 =\rho_0$. The smoothness of the deformation
is assumed uniformly in time $T$, and we write $\rho_\ve = \lim_T
\rho_T^\varepsilon$. We also denote the stationary entropy flux by
$\si_\ve$; it is given as 
\begin{eqnarray}
\sigma_\varepsilon &=& \frac{1}{T} \langle S_{\irr}^T
  \rangle_{\rho_\ve} \nonumber\\
&=&
 \be \sum_{x} \rho_\ve(x) \sum_{y \neq x} \la_\ve(x,y) F_\ve(x,y)\nonumber\\
&=&
  \frac{\be}{2}\,
  \sum_{x,y} [\rho_\ve(x)\la_\ve(x,y) - \rho_\ve(y)\la_\ve(y,x)]\,F_\ve(x,y)\nonumber\\
&=&
  \frac{\varepsilon\be}{2}\,
  \sum_{x,y} \gamma_0(x,y)\,
  \Bigl[\frac{\rho_\ve(x)}{\rho_0(x)}-\frac{\rho_\ve(y)}{\rho_0(y)}\Bigr]\,
  F_1(x,y) + o(\varepsilon^2)
\end{eqnarray}
independently of time span $T$.  We finally recall the linear term
$w_1$ from \eqref{w1}.
\begin{theorem}\label{tt}
Suppose that the equilibrium process~\eqref{db} is irreducible.
The following limiting identities are verified: 
\begin{eqnarray}
\lim_{T\uparrow +\infty}\lim_{\varepsilon\rightarrow 0}\,\frac
1{\varepsilon}\log\frac{\rho_T^\varepsilon(x)}{\rho_0(x)} &=&
\lim_{\varepsilon\rightarrow 0}\lim_{T\uparrow +\infty}\,\frac
1{\varepsilon}\log\frac{\rho_T^\varepsilon(x)}{\rho_0(x)}\label{triv}\\
&=& -\be\int_0^{+\infty}\id t \,\langle w_1(x_t)\rangle_{x}^0
\label{1}
\end{eqnarray}
Moreover, 
\begin{equation}
  \varepsilon\be\int_0^{+\infty}\id t \,\langle w_1(x_t)\rangle_{x}^0
  = \lim_{T\uparrow +\infty} \big[\langle S_{\irr}^T \rangle_x
  - \sigma_\varepsilon T\big] + O(\varepsilon^2) \label{2}
\end{equation}
\end{theorem}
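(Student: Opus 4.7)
The plan is to linearize the master equation in $\varepsilon$ around the equilibrium solution and then exploit the spectral gap of the equilibrium generator. Writing
\[
\rho_T^\varepsilon(x) = \rho_0(x)\bigl(1 + \varepsilon\, r_T(x) + O(\varepsilon^2)\bigr),
\qquad r_0 \equiv 0,
\]
and, from \eqref{transrates}, $\lambda_\varepsilon(x,y) = \lambda_0(x,y)(1 + \tfrac12\beta\varepsilon F_1(x,y)) + O(\varepsilon^2)$ independently of any higher-order $\varepsilon$-dependence of $\gamma_\varepsilon$, I substitute into \eqref{timev}. The order-$1$ terms vanish by detailed balance \eqref{db}; at order $\varepsilon$, a second use of \eqref{db} on the gain terms yields
\[
\frac{\id r_T(x)}{\id T} = (L_0 r_T)(x) - \beta\, w_1(x),
\qquad
(L_0 f)(x) := \sum_y \lambda_0(x,y)\bigl[f(y) - f(x)\bigr],
\]
with the inhomogeneity coming from the product of $F_1$ with the symmetric flux $\rho_0(x)\lambda_0(x,y)$. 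Duhamel's formula with vanishing initial datum then gives $r_T(x) = -\beta\int_0^T \id s\,\langle w_1(x_s)\rangle_x^0$.

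The key input for sending $T\to\infty$ is $\langle w_1\rangle_{\rho_0}^0 = 0$, which follows by symmetrizing via \eqref{db} and the antisymmetry $F_1(y,x) = -F_1(x,y)$. Combined with the spectral gap of the irreducible finite generator $L_0$ on the $\rho_0$-mean-zero subspace, this makes $\langle w_1(x_s)\rangle_x^0$ decay exponentially in $s$, so $r_T(x) \to r_\infty(x) := -\beta\int_0^\infty \langle w_1(x_s)\rangle_x^0\,\id s$, proving \eqref{1} in the order $\lim_T\lim_\varepsilon$. For the reverse order, finite-dimensional analytic perturbation theory makes $\rho_\varepsilon$ analytic at $\varepsilon = 0$; inserting $\rho_\varepsilon = \rho_0(1 + \varepsilon\tilde r + O(\varepsilon^2))$ into the stationarity equation, the coefficient $\tilde r$ solves $L_0\tilde r = \beta w_1$ with the normalization $\langle\tilde r\rangle_{\rho_0} = 0$. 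On that zero-mean subspace $L_0$ is invertible, so the solution is unique and necessarily coincides with the stationary value $r_\infty$ of the ODE above, establishing the commutation of limits \eqref{triv}.

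For identity \eqref{2}, I start from \eqref{sout},
\[
\langle S_{\irr}^T\rangle_x = \beta \int_0^T \id t\, \bigl\langle w_\varepsilon(x_t)\bigr\rangle_x,
\qquad
w_\varepsilon(x) := \sum_y \lambda_\varepsilon(x,y) F_\varepsilon(x,y) = \varepsilon\, w_1(x) + O(\varepsilon^2),
\]
where the inner average is under the $\varepsilon$-perturbed process started at $x$. Since $\sigma_\varepsilon = \beta \sum_y \rho_\varepsilon(y)\,w_\varepsilon(y)$ has vanishing $O(\varepsilon)$ piece (again by $\langle w_1\rangle_{\rho_0}^0 = 0$), $\sigma_\varepsilon = O(\varepsilon^2)$. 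Rewriting
\[
\langle S_{\irr}^T\rangle_x - \sigma_\varepsilon T = \beta\int_0^T \id t\,\bigl[\langle w_\varepsilon(x_t)\rangle_x - \sigma_\varepsilon/\beta\bigr],
\]
the bracket decays exponentially in $t$ by the spectral gap of the nonequilibrium chain, uniform for small $\varepsilon$; expanding the integrand to first order in $\varepsilon$ produces $\varepsilon\langle w_1(x_t)\rangle_x^0 + O(\varepsilon^2)$ with an exponentially integrable remainder, from which \eqref{2} follows. The main technical delicacy throughout is tracking uniformity in $T$ of the $O(\varepsilon^2)$ remainders: one must argue, via a spectral gap uniform for $\varepsilon \in [0,\varepsilon_0]$, that $\sigma_\varepsilon T$ is the only secular divergence and that the residual integrands are exponentially integrable in $t$ with the correct $\varepsilon$-prefactor. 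In the finite-state setting this is routine but somewhat tedious book-keeping rather than a conceptual obstacle.
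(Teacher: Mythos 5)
Your proof is correct, but for the central identity \eqref{1} it takes a genuinely different route from the paper's own proof --- in fact, essentially the route that the paper relegates to its subsection ``Variations''. You linearize the Master equation around $\rho_0$ (equivalently, perturb the generator and apply Duhamel), obtaining $r_T(x) = -\beta\int_0^T \langle w_1(x_s)\rangle_x^0\,\id s$, which is precisely the content of \eqref{ftc}--\eqref{ftc2}. The paper's proof instead departs from the exact transient fluctuation symmetry \eqref{fluc}, which gives the finite-$T$, finite-$\varepsilon$ identity $\rho_T^\varepsilon(x)=\rho_0(x)\langle e^{-S_{\irr}^T}\rangle_x$ in \eqref{fs} and only then expands in $\varepsilon$; what that buys is the structural link between the first-order correction and the irreversible entropy flux, and it is the version that generalizes to \eqref{dent}. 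Your handling of \eqref{triv} is also different and arguably more concrete: you evaluate both iterated limits independently (the inner $\varepsilon$-limit at fixed $T$ via Duhamel, the inner $T$-limit via analytic perturbation of the stationary measure) and match them through the unique $\rho_0$-mean-zero solution of $L_0\tilde r=\beta w_1$, whereas the paper infers the exchange from uniform-in-$\varepsilon$ relaxation. For \eqref{2} your argument coincides in substance with the paper's (subtracting the stationary value and using exponential decay of $\langle w_\varepsilon(x_t)\rangle_x-\sigma_\varepsilon/\beta$). Two minor points: your displayed expansion of $\lambda_\varepsilon$ silently drops a possible $O(\varepsilon)$ contribution from $\gamma_\varepsilon$ --- harmless, but only because a symmetric perturbation of $\gamma$ preserves detailed balance with respect to $\rho_0$ and hence does not feed the forcing term $-\beta w_1$, which is worth stating; and the uniform spectral gap for $\varepsilon\in[0,\varepsilon_0]$ may require shrinking the interval, as the paper also concedes with its $\varepsilon_1\leq\varepsilon_0$.
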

\begin{remark}
According to the above result, the stationary distribution has the
form 
\begin{equation}
\begin{split}
  \rho_\ve(x) &= \rho_0(x) \exp \Bigl\{-\varepsilon\be\,\int_0^{+\infty}
  \id t\,\langle w_1(x_t) \rangle^0_{x} + O(\ve^2) \Bigr\}
\\
  &= \rho_0(x) \exp \Bigl\{ \lim_{T \uparrow +\infty}
  \bigl[ \sigma_\varepsilon T - \langle S_{\irr}^T
  \rangle_x \bigr] + O(\varepsilon^2) \bigr\}
\end{split}
\end{equation}
which is consistent with the original McLennan's
proposal~\eqref{mac} in the sense that it
identifies the correction term $W$ as the transient part of the `irreversible' entropy production for the process started from state $x$. Notice that this transient part is $O(\ve)$, in contrast to the stationary entropy production rate which is $O(\ve^2)$; the latter being also the leading order of the long-time divergence that needs to be removed. Hence, loosely speaking, the McLennan proposal is all correct for close-to-equilibrium
processes provided that the divergence present in higher orders in $\ve$ is killed by a suitable counterterm.
\end{remark}
\begin{proof}
The first equality~\eqref{triv} follows from the irreducibility of
the reference equilibrium process. The $\ve-$dependent process is
obtained by its smooth deformation,
see~\eqref{ldb}--\eqref{transrates}. Therefore,
$\rho_T^\varepsilon \rightarrow \rho_\ve$ uniformly in
$\varepsilon \in [0,\varepsilon_1]$ with some $0 < \ve_1 \leq
\ve_0$.

In order to prove the equality~\eqref{1}, the point of departure
is the transient fluctuation symmetry. The formula~\eqref{girsan}
obviously implies
\[
\id\cal{P}_{\rho_0}(\omega) = \id\cal{P}_0(\omega)\,e^{-A(\omega)}
\]
when starting (in the left-hand side) the nonequilibrium process
from the equilibrium law $\rho_0$. Since the equilibrium process
$\cal{P}_0$ is time-reversal invariant, we have for \eqref{sir}
\[
S_{\irr}^T(\omega)  =
\log\frac{\id\cal{P}_{\rho_0}}{\id\cal{P}_{\rho_0}\theta}(\omega)
\] and hence, for all functions $f$ on path-space,
\begin{equation}\label{fluc}
 \langle f\rangle_{\rho_0} = \langle f\theta\;
\exp (-S_{\irr}^T) \rangle_{\rho_0}
\end{equation}
That is an exact (for all finite times $T$) fluctuation symmetry.
Take in \eqref{fluc} $f(\omega) = \delta_{x_T=x}$, the
Kronecker-delta function equal one if the trajectory ends up at
state $x$ and zero otherwise. We get
\begin{equation}\label{fs}
  \rho^\varepsilon_T(x) = \rho_0(x)\,\langle e^{-S_{\irr}^T}\rangle_x
\end{equation}
where the right-hand side averages over the nonequilibrium process
started from the state $x$. We substitute~\eqref{sir} and we use
that the Poisson number of jumps has all exponential moments, to
expand the exponential in~\eqref{fs}. Using~\eqref{sout}, it is
then easy to verify that 
\begin{equation}\label{the}
  \lim_{\varepsilon\downarrow 0} \frac 1{\varepsilon} \log \langle
  e^{-S_{\irr}^T}\rangle_x =
  - \be\Bigl\langle \int_0^T\id t \,w_1(x_t) \Bigr\rangle^0_x
\end{equation}
which has a limit as $T\uparrow +\infty$ since, uniformly in the
initial $x$, the equilibrium process relaxes exponentially fast to
its stationary law for which $\sum_x \rho_0(x) w_1(x) =0$ by
\eqref{db}. That proves formula \eqref{1}.

We now turn to \eqref{2}. For all initial laws $\mu$, 
\[
  \langle S_{\irr}^T \rangle_\mu - \sigma_\varepsilon T =
  \be \int_0^T \id t\, \sum_{x} [\mu_t(x)-\rho_\ve(x)]
  \sum_{y \neq x} \la_\ve(x,y) F_\ve(x,y)
\]
has a limit $T\uparrow +\infty$.  Since all terms are uniformly
bounded, we can here freely exchange the limits
$T\uparrow +\infty$ and $\varepsilon\downarrow 0$.  The $T-$limit has leading
order in $\varepsilon$, for $\varepsilon \downarrow 0$, 
\begin{eqnarray}\label{calcul}
  \varepsilon\be \int_0^{+\infty} \id t\, \sum_{x}
  [\mu_t^0(x)-\rho_0(x)] \sum_{y \neq x} \la_0(x,y)
  F_1(x,y)\nonumber\\
  = \varepsilon\be \int_0^{+\infty} \id t\, \sum_{x} \mu_t^0(x)
  \sum_{y \neq x} \la_0(x,y) F_1(x,y)
\end{eqnarray}
because $F_1(x,y)$ is antisymmetric and
$\rho_0(x)\lambda_0(x,y)$ is symmetric under $x\leftrightarrow y$.
That proves \eqref{2}.
\end{proof}

\subsection{Variations}

What is mostly new about the above arguments is the point of
departure~\eqref{fluc}, which naturally links McLennan's
correction to the `irreversible'
entropy production, cf. \cite{KN}. There are other schemes that do not have this advantage; still they can be applied to obtain systematic corrections to the equilibrium distribution. We discuss one such a formulation and we apply it to obtain a different perturbation scheme not having an interpretation in terms of the entropy production.\\

By using the fundamental theorem of calculus, the stationary
measure $\rho_\ve = \rho$ can be obtained in the form
\begin{equation}\label{ftc}
\rho(x) = \rho_0(x) + \int_0^{+\infty}\id t\,e^{t
L^\star}\,L^\star\rho_0(x)
\end{equation}
for initial law $\rho_0$ and with $L^\star\mu(x) = \sum_{y \neq x}
[\lambda(y,x)\mu(y) - \lambda(x,y)\mu(x)]$ the forward generator
of the Markov jump process under consideration. Then, from
\eqref{transrates},
\[
L^\star\rho_0(x) = \rho_0(x)\,h(x),\quad h(x) = \sum_{y \neq x}
\lambda(x,y) [e^{-\beta F(x,y)} - 1]
\]
is of order $\varepsilon$, and
\[
  e^{t L^\star_0}\,[\rho_0 h](x) = \rho_0(x) e^{t L_0}h(x)
\]
with $L_0$ the backward equilibrium generator. Substituting
to~\eqref{ftc}, we get the expression
\begin{equation}\label{ftc2}
  \rho_\ve(x) = \rho_0(x) \Bigl[ 1 + \int_0^{+\infty}\id t\,e^{t L_0}h(x)
  \Bigr] + O(\varepsilon^2)
\end{equation}
Since 
\[
  h(x) = -\ve\be \sum_{y \neq x}
  \lambda_0(x,y) F_1(x,y) + O(\varepsilon^2)
\]
we have rederived formula~\eqref{1}.\\

The scheme \eqref{ldb}--\eqref{transrates} in combination with the
$\varepsilon$-dependence according to 
$F(x,y) = \varepsilon F_1(x,y)$ is a special way of breaking the detailed balance
condition.  One easily encounters other physically relevant
mechanisms. For example, suppose that for the reference
equilibrium dynamics transitions $x \rightleftarrows y$ between
specific states are forbidden.  We may imagine two uncoupled
equilibrium systems.  The nonequilibrium dynamics could introduce
a small coupling with for example
\[
\lambda(x,y) = \varepsilon k(x,y),\;\; \lambda(y,x) = \varepsilon
k(y,x)
\]
for these specific transitions.  We do not longer enjoy then the
absolute continuity of the nonequilibrium process with respect to
the equilibrium reference and the relations
\eqref{girs}--\eqref{girsan} break down. The entropy production as
a function on path-space does not depend on $\varepsilon$.
Nevertheless, from \eqref{ftc} we can still compute the linear
correction to the reference law $\rho_0$.  It is exactly of the
form \eqref{ftc2} but with 
\begin{align}
h(x) &  = \varepsilon \sum_{y\nleftarrow x}
\Bigl[\frac{\rho_0(y)}{\rho_0(x)}k(y,x) - k(x,y)\Bigr] =
\varepsilon \sum_{y\nleftarrow x} k(x,y)[e^{-\phi(x,y)}
- 1]
\\ \intertext{and}
\phi(x,y) &= \log
\frac{\rho_0(x)k(x,y)}{\rho_0(y)k(y,x)}
\end{align}
where the sum is over the {\it for the equilibrium dynamics}
forbidden transitions from state $x$.  The difference with
nonequilibrium perturbations where one adds a small driving in a
local detailed balance condition, as in
\eqref{ldb}--\eqref{transrates}, is manifest. The correction to
equilibrium is not of the form of an entropy flux.  In other
words, not all perturbations from equilibrium, even physical ones,
lead to the same type of correction to the equilibrium
distribution: the specific McLennan correction in terms of the
irreversible entropy flux arises (only) by the change from
detailed balance to local detailed balance by inserting some small
driving.

\subsection{Example: boundary driven lattice gas}\label{leq}
The following example makes the above considerations and
formul{\ae} more concrete.
We also take the opportunity to explain the relation with local equilibrium.\\

We consider a lattice gas on the sites $\{-N,\ldots,0\}$ where the
configurations $x, y$ indicate the vacancy or presence $x(i) =
0,1$ of a particle at each site $i$. The dynamics distinguishes
two ways of updating.

We concentrate on the case where there is a bulk conservation law
as in Kawasaki dynamics.  To be specific we choose the bulk
transitions as an exchange of nearest neighbor occupancy: we write
$x^{i,i+1}$ for the configuration that equals $x$ except that
the occupations at sites $i$ and $i+1$ are interchanged.  Then,\\
\begin{equation}\label{star}
\lambda(x,y) = a_i\exp \bigl( -\frac{\beta}{2}\{U(y) - U(x)\}
\bigr) \text{ when } y=x^{i,i+1}, i=-N,\ldots,-1
\end{equation}
At the left and right boundaries, there is a birth and death
process: with $y^i$ the configuration for which at site $i$ the
occupation has been inverted,
\[
  \lambda(x,y) = \exp \bigl( -\frac{\beta}{2}\{U(y) - U(x)\} \bigr)\,
  \exp \bigl( -\frac{\beta b_i}{2} (2x(i)-1) \bigr) \text{ when } y=x^{i}, i=-N,0
\]
with $b_i$ playing the role of chemical potential of left and right reservoirs.\\
Suppose now that $a_i=1,b_{-N}=0, b_0=\varepsilon$ which is a
close-to-equilibrium system in the sense of Section \ref{s1}; the
equilibrium law is $\rho_0(x) \propto \exp[-\beta U(x)]$, reached
for $\varepsilon = 0$. The expression \eqref{w1} becomes 
\begin{equation}\label{w111}
  w_1(x) = \exp \bigl( -\frac{\beta}{2}\{U(x^0) - U(x)\} \bigr)\,
  [2x(0)-1]
\end{equation}
On the other hand, consider the function $g(x) = \sum_{i=-N}^{0}
v_i x(i)$, for some profile $v_i$.  Then,
\begin{equation}\label{lo}
\begin{split}
L_0 g(x) = &\sum_{i=-N}^{-1}  [v_i-v_{i+1}] \,j_{i}(x) +
v_0\,\exp[-\beta\{U(x^0) - U(x)\}][2x(0)-1]
\\
&+ v_{-N}\,\exp[-\beta\{U(x^{-N}) - U(x)\}][2x(-N)-1]
\end{split}
\end{equation}
with systematic currents over the bonds $(i,i+1)$
\[
 j_i(x) = \lambda(x,x^{i,i+1}) [x(i+1)-x(i)]
\]
as they appear in the continuity equation for local particle
number.  Choose 
$v_i = 1 + i/N$ which makes $v_0 = 1$, $v_{-N}=0$, and
$v_{i+1} - v_i = 1/N$. Comparing~\eqref{w111}
with \eqref{lo} yields 
\begin{equation}\label{los}
  L_0 g(x) = -\frac{1}{N}\sum_{i=-N}^{-1}j_{i}(x) + w_1(x)
\end{equation}
For the McLennan-form \eqref{1} we must take the time-integral of~\eqref{w111} so that 
\begin{equation}\label{loce}
\rho_\varepsilon(x)\propto \rho_0(x)\,\exp \Bigl\{\varepsilon\be
\sum_{i={-N}}^{0} v_i x(i) \Bigr\}\, \exp
\Bigl\{-\frac{\varepsilon\beta}{N}
\sum_{i=-N}^{-1}\int_0^{+\infty} \id t\,e^{tL_0}j_{i}(x) \Bigr\}
\end{equation}
This expression \eqref{loce} for the approximate stationary
distribution is of the form of local equilibrium for the conserved
quantity (particle number) containing the linear profile $v_i$ for
the (local) chemical potential.  The remaining integral
\[
\varepsilon\beta\,\int_0^{+\infty}\id
t\,\frac{1}{N}\sum_{i=-N}^{-1}\langle j_{i}(x_t)\rangle^0_x
\]
makes mathematical sense because the local currents die out
exponentially fast for the equilibrium dynamics. It was also
discussed in the same context as formula (3.49) in \cite{ELS}. As
$N$ gets large, the average current gets even smaller for each
fixed time.  It appears, without proof, that for boundary driven
spatially extended systems the McLennan-regime
close-to-equilibrium can also be reached by taking $N$ large, for
fixed chemical potential difference, while
that is not included in formulations such as~\eqref{1}.\\

\section{Diffusion processes}\label{dif}
In this section we treat Markov diffusion processes and we give an
alternative derivation of the main result.

\subsection{General argument}\label{sec: dif-gen}


We consider the class of $d-$dimensional inhomogeneous
It\^o-diffusions
\begin{equation}\label{gsd1}
  \id x_t = \bigl\{\chi(x_t) \bigl[ F(x_t) -
  \nabla U(x_t) \bigr] + \nabla\cdot D(x_t)\bigr\}\,\id t
  + \sqrt{2 D(x_t)}\, \id B_t
\end{equation}
defined on a torus (i.e., we assume periodic boundary conditions).
We use the notation $\nabla \cdot D$ for the vector with
components $\sum_k \partial_l D_{k l}$ and assume $D=\chi/\beta$;
the latter being the Einstein relation as a variant of the local
detailed balance condition for diffusion. The $d-$dimensional
vector $\id B_t$ has independent standard Gaussian white noise
components. We assume the fields $F(x)$, $U(x)$, and $\chi(x)$ are
smooth and the matrices $\chi(x)$ are symmetric and strictly
positive at all points $x$.

To each distribution with density $\mu$ there is associated a
current density
\begin{equation}\label{cu}
  j_\mu = \chi(F - \nabla U)\mu - D\nabla\mu
\end{equation}
and the stationarity of the law $\mu=\rho$ is equivalent with the
condition $\nabla \cdot j_\rho = 0$. The $j_\mu$  gives the
expected profile of the `real' particle current at given density
$\mu$ in the sense that  for any smooth function $f$,
\begin{equation}\label{str-identity}
  \Bigl\langle \int_0^T f(x_t) \circ \id x_t \Bigr\rangle_\mu
  = \int_0^T \id t \int f(x)\, j_{\mu_t}(x)\,\id x
\end{equation}
The left-hand side is the average of a Stratonovich-stochastic
integral  under the diffusion process started at time $t=0$ from
density~$\mu$. In particular, the instantaneous mean work of the
force $F$ is then
\begin{equation}\label{work}
  W(\mu) = \int F \cdot j_\mu\,\id x
  = \int \,w \mu\,\id x
\end{equation}
for \begin{equation}\label{workw}
  w = F \cdot \chi F - \chi F
  \cdot \nabla U + \nabla \cdot (D F)
\end{equation}
In order to check the McLennan-proposal we isolate the linear
order in \eqref{work}. We take the case $F = 0$, $\rho_0 =
\exp[-\be U]/Z > 0$ as the equilibrium reference and we expand
with small parameter $\varepsilon$:
\begin{eqnarray}\label{smo}
F &=& \varepsilon F_1 + \ldots\nonumber\\
w &=& \varepsilon w_1 + \ldots\\
\rho &=& \rho_\ve = \rho_0(1 + \varepsilon h_1 + \ldots)\nonumber
\end{eqnarray}
assuming smooth behavior around $\varepsilon=0$. From
\eqref{workw},
\begin{equation}\label{linwork}
  w_1 = \nabla \cdot (D F_1) - \chi F_1 \cdot \nabla U
  = \frac{\nabla \cdot (\rho_0 \chi F_1)}{\be\rho_0}
\end{equation}
the linear term in the mean work performed by the force $F$.  It
turns out, as in \eqref{1} and in agreement with McLennan's
proposal, that $h_1$ is given in terms of the mean total work
performed on the particle started from $x$ under the reference
dynamics.

\begin{theorem}\label{dd}
Suppose the process \eqref{gsd1} converges exponentially fast and
uniformly in initial states to its unique stationary probability
distribution with smooth density $\rho$ around $\rho_0$ as in
\eqref{smo}. Then,
\begin{equation}\label{ende}
  h_1(x)
  = -\be \Bigl\langle \int_0^\infty w_1(x_t)\,\id t \Bigr\rangle^0_x
\end{equation}
\end{theorem}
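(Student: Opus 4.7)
The plan is to imitate the fundamental-theorem-of-calculus argument \eqref{ftc}--\eqref{ftc2} from the jump case, now in the Fokker--Planck setting. Stationarity of $\rho_\ve$ is $L^\star \rho_\ve = 0$, where $L^\star \mu = -\nabla \cdot j_\mu$ with $j_\mu$ given by \eqref{cu}. I would split $L^\star = L_0^\star + \ve L_1^\star + O(\ve^2)$ with $L_1^\star \mu = -\nabla \cdot (\chi F_1\,\mu)$, insert the expansion $\rho_\ve = \rho_0(1 + \ve h_1 + \ldots)$, and read off the $O(\ve)$ part of the stationarity equation as $L_0^\star(\rho_0 h_1) = -L_1^\star \rho_0$.

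The first step is to rewrite both sides in terms of the backward equilibrium generator $L_0$. By \eqref{linwork} one has $L_1^\star \rho_0 = -\nabla \cdot(\rho_0 \chi F_1) = -\be \rho_0 w_1$. On the other side, detailed balance of the reference dynamics — concretely, $\nabla \rho_0 = -\be \rho_0 \nabla U$ together with $D = \chi/\be$ — gives the identity $L_0^\star(\rho_0 f) = \rho_0 L_0 f$ for smooth $f$, which one verifies by expanding $\nabla \cdot(D \rho_0 \nabla f)$ and noticing that the $\nabla U$ terms combine correctly. The first-order equation then collapses to $L_0 h_1 = \be w_1$ on the torus.

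The second step is to invert $L_0$ by a Duhamel representation: formally
\[
  h_1(x) = -\be \int_0^\infty \bigl(e^{tL_0} w_1\bigr)(x)\,\id t
  = -\be \int_0^\infty \langle w_1(x_t)\rangle^0_x\,\id t
\]
For this to make sense I need $w_1$ to have $\rho_0$-mean zero, which is immediate on the torus from $\int \rho_0 w_1\,\id x = \be^{-1}\int \nabla\cdot(\rho_0 \chi F_1)\,\id x = 0$. Combined with the assumed exponentially fast, uniform-in-initial-state relaxation of the equilibrium semigroup, the time integral converges pointwise. The right additive constant in $h_1$ is fixed to zero by the normalization $\int \rho_0 h_1\,\id x = 0$ that follows from $\int \rho_\ve\,\id x = 1$; this is automatically satisfied by the candidate on the right, since $\int \rho_0\, e^{tL_0}w_1\,\id x = \int \rho_0 w_1\,\id x = 0$ by stationarity of $\rho_0$ under $L_0$.

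The main obstacle is not any of these algebraic steps but the justification for exchanging the $\ve \to 0$ perturbation expansion with the $T \to \infty$ limit hidden in the Duhamel formula, i.e., the claim that the formal first-order coefficient of the true $\rho_\ve$ really agrees with the $h_1$ constructed above. This is handled exactly by the hypothesis of Theorem~\ref{dd}, namely uniform exponential convergence and smoothness of $\rho_\ve$ in $\ve$ around zero, playing the role that uniform boundedness played in the calculation \eqref{calcul} of the jump case.
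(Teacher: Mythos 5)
Your proposal is correct and follows essentially the same route as the paper: both reduce the first-order stationarity equation to the Poisson equation $L_0 h_1 = \be w_1$ (the paper by rewriting the current as $j_\mu = \mu\chi F - \rho_0 D\nabla(\mu/\rho_0)$, you by the equivalent operator identity $L_0^\star(\rho_0 f) = \rho_0 L_0 f$) and then invert $L_0$ on the mean-zero subspace via the semigroup representation. Your explicit checks that $\int \rho_0 w_1\,\id x = 0$ and that the additive constant in $h_1$ is fixed by normalization are worthwhile details that the paper leaves implicit, but they do not change the argument.
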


\begin{proof}
The current \eqref{cu} can be rewritten in terms of the reference
equilibrium density
\begin{equation}
  j_\mu = \mu\chi(F - \nabla U) - D\nabla\mu
  = \mu\chi F - \rho_0 D \nabla \bigl( \frac{\mu}{\rho_0} \bigr)
\end{equation}
so that the stationarity condition $\nabla\cdot j_\rho = 0$
implies
\begin{equation}\label{FP}
\begin{split}
  0
  &= \nabla \cdot \bigl[ \rho_0 D \nabla \bigl(\frac{\rho}{\rho_0}\bigr) - \rho\chi F \bigr]
\\
  &= \rho_0 D \nabla \cdot \nabla \bigl(\frac{\rho}{\rho_0}\bigr)
  + \rho_0 (\nabla \cdot D) \cdot \nabla \bigl(\frac{\rho}{\rho_0}\bigr)
  - \rho_0 \chi\nabla U \cdot \nabla \bigl(\frac{\rho}{\rho_0}\bigr)
  - \nabla \cdot (\rho\chi F)
\\
  &= \rho_0 L_0 \bigl(\frac{\rho}{\rho_0}\bigr) - \nabla \cdot (\rho\chi F)
\end{split}
\end{equation}
with $L_0$ the backward reference equilibrium generator.  To
linear order, that gives
\begin{equation}
  L_0 h_1 = \be w_1
\end{equation}
with $w_1$ from \eqref{linwork}.

As the relaxation to equilibrium is fast enough, $L_0$ can be
inverted on the space
\begin{equation}
  \Om^\bot = \{g:\,\int g \rho_0\,\id x = 0\}
\end{equation}
and since $w_1 \in \Om^\bot$, the stationary solution must have
first order
\begin{equation}
  h_1 = \be L_0^{-1} w_1
\end{equation}
or
\begin{equation}\label{end}
  h_1(x) = -\be \int_0^\infty \bigl( e^{t L_0} w_1 \bigr)(x) \,\id t
  = -\be \Bigl\langle \int_0^\infty w_1(x_t)\,\id t \Bigr\rangle^0_x
\end{equation}
as required.
\end{proof}

One might be tempted to write
\begin{equation} \label{linear}
  \frac{\rho_\ve}{\rho_0}(x) \stackrel{?}{=}
  1 - \be \Bigl\langle \int_0^\infty
  w(x_t)\,\id t \Bigr\rangle_x + O(\varepsilon^2)
\end{equation}
(i.e., with the \emph{full} work and possibly under the
nonequilibrium measure) but this is only formally true in the
sense that the $O(\varepsilon)$ terms on both sides are equal;
however the second term on the right-hand side generally diverges
now because of its $O(\varepsilon^2)$ part. Another way to see
that is by observing that $w \not\in \Om^\bot$ in general, in
which case $L_0^{-1}w$ does not exist.

Still, one can proceed similarly as in the case of jump processes
and add a suitable counter-term on the right-hand side
of~\eqref{linear}.  All that explains what is actually the
rigorous meaning of McLennan's proposal: to correctly describe the
first order correction, one is not allowed to deal with the
\emph{full} transient entropy production unless the divergences
coming from the high-order corrections are removed. For safe
first-order calculations one needs to take the linear part of the
entropy production functional only, as done above in
Theorem~\ref{dd}.

\subsection{Green-Kubo relations}\label{gk}

An expression for the close-to-equilibrium stationary density
obviously yields information about the stationary current in
linear response around equilibrium.
That  provides another derivation of the well-known Green-Kubo
relations between the current and equilibrium time-correlations.\\

We use the same notation as in the previous section but this time
we need to indicate the dependence on the driving force $F$, e.g.,
the linear part of the work~\eqref{linwork} is now written as
\begin{equation}
  w_1^F = 
  \frac{\nabla \cdot (\rho_0 \chi F_1)}{\be \rho_0}
\end{equation}
By expanding the mean stationary current~\eqref{cu} ($\mu = \rho$)
in powers of $\ve$, $j_\rho^F = \ve \frj_1^F + \ldots$, the
leading term has the form
\begin{equation}
  \frj_1^F = \rho_0 \chi\, [ F_1 - \nabla (L_0^{-1} w_1^F) ]
\end{equation}
where we have substituted the McLennan form. Suppose $G = \ve G_1
+ \ldots$ is another smooth field. Then we have, up to order
$\ve^2$, that $\int G \cdot j^F_\rho\,\id x = \ve^2 \int G_1 \cdot
\frj_1^F\,\id x + o(\ve^2)$ and
\begin{equation}\label{eq: G-F}
\begin{split}
  \int G_1 \cdot \frj_1^F\,\id x
  &= \int \rho_0 [G_1 \cdot \chi F_1 + \be w_1^G L_0^{-1} w_1^F]\,\id x
\end{split}
\end{equation}
Since $L_0$ is self-adjoint with respect to the scalar product
$(f,g) = \int \rho_0 \bar f g\,\id x$ (and $L_0^{-1}$ is therefore
self-adjoint on $\Om^\bot$) and as the matrix $\chi$ is symmetric,
we immediately get the Onsager reciprocity relations in the form
\begin{equation}
  \int G_1 \cdot \frj_1^F\,\id x =
  \int F_1 \cdot \frj_1^G\,\id x
\end{equation}
Note that for $F = G$ the formula~\eqref{eq: G-F} gives the
leading (= second order) term in the expansion for the stationary
instantaneous mean work $W^F(\rho^F)$, see \eqref{work}, whereas
for $F \neq G$ it corresponds to the `interference' contribution
when the driving fields are added:
\begin{equation}
  W^{F+G}(\rho^{F+G}) - W^F(\rho^F) - W^G(\rho^G) =
  2\ve^2 \int G_1 \cdot \frj_1^F\,\id x + o(\ve^2)
\end{equation}
\begin{theorem}\label{them: green-kubo}
Under the same assumptions as in Theorem~\ref{dd},
\begin{equation}\label{eq: green-kubo}
\begin{split}
  \int G_1 \cdot \frj_1^F\,\id x
  &= \int \id x \rho_0\,[\, G_1 \cdot \chi F_1 + \be w_1^G L_0^{-1}
  w_1^F \, ]
\\
  &= \lim_{T \uparrow +\infty} \frac{\be}{2T}
  \Bigl\langle \int_0^T G_1(x_t) \circ \id x_t
  \int_0^T F_1(x_s) \circ \id x_s \Bigr\rangle^0
\end{split}
\end{equation}
where $\circ$ indicates the Stratonovich integration
(incorporating the scalar product) and the last expectation is
under the equilibrium process.
\end{theorem}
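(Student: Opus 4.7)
The first equality in~\eqref{eq: green-kubo} is already~\eqref{eq: G-F}, itself a consequence of the McLennan form of $\rho^F$ from Theorem~\ref{dd}; the content of the theorem is therefore to identify the last line (an equilibrium Stratonovich covariance) with the same bilinear expression.

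My plan is to decompose each Stratonovich line integral under the equilibrium dynamics as a ``mean work'' drift plus an Ito martingale, and then to assemble the four resulting pieces. Using the Stratonovich-to-Ito correction, the equilibrium drift $-\chi\nabla U + \nabla\cdot D$ and the Einstein relation $D = \chi/\beta$, one obtains
\begin{equation*}
  \int_0^T F_1(x_s) \circ \id x_s = \int_0^T w_1^F(x_s)\,\id s + \int_0^T F_1(x_s)\cdot\sigma(x_s)\,\id B_s \equiv V^F_T + M^F_T,
\end{equation*}
where $\sigma\sigma^T = 2D$; the algebraic miracle is that the drift contribution and the Stratonovich correction combine into $\nabla\cdot(D F_1) - \chi F_1\cdot\nabla U = w_1^F$, precisely as in~\eqref{linwork}. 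An identical decomposition $Y^G_T = V^G_T + M^G_T$ applies to $G$.

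The martingale-martingale piece falls to Ito isometry, giving $\frac{1}{T}\langle M^G_T M^F_T\rangle^0 = \frac{2}{\be}\int\rho_0\,G_1\cdot\chi F_1\,\id x$, which after multiplication by $\be/2$ reproduces the first term on the right of~\eqref{eq: green-kubo}. For $\langle V^G_T V^F_T\rangle^0$ I would use stationarity and the $\rho_0$-self-adjointness of $e^{uL_0}$ to write the two-time correlation as $\int\rho_0\,w_1^G e^{|t-s|L_0}w_1^F\,\id x$, and then exploit $w_1^F\in\Om^\bot$ (since $\rho_0 w_1^F$ is a divergence) together with $\int_0^\infty e^{uL_0}\,\id u = -L_0^{-1}$ on $\Om^\bot$ to obtain $\frac{1}{T}\langle V^G_T V^F_T\rangle^0 \to -2\int\rho_0\,w_1^G L_0^{-1}w_1^F\,\id x$.

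The two mixed pieces are the most delicate. The plan is to solve the Poisson equation $L_0\phi = w_1^F$ on $\Om^\bot$, admissible by the spectral-gap assumption, and to apply Ito's formula to $\phi(x_t)$ to get $V^F_T = \phi(x_T) - \phi(x_0) - \int_0^T \nabla\phi\cdot\sigma\,\id B_s$. The boundary term is $O(\sqrt T)$ and disappears after the $1/T$ normalization; another Ito isometry combined with integration by parts on the torus (using $\nabla\cdot(\rho_0\chi G_1) = \be\rho_0 w_1^G$) yields $\frac{1}{T}\langle M^G_T V^F_T\rangle^0 \to 2\int\rho_0\,w_1^G L_0^{-1}w_1^F\,\id x$, and symmetrically for $\langle V^G_T M^F_T\rangle^0$. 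Summing the four contributions and multiplying by $\be/2$ produces exactly $\int\rho_0\,G_1\cdot\chi F_1\,\id x + \be\int\rho_0\,w_1^G L_0^{-1}w_1^F\,\id x$, as required. The main technical obstacle is controlling the $\phi(x_T)-\phi(x_0)$ boundary contributions, which hinges on boundedness of the Poisson-equation solution $\phi$ on the torus; this is in turn guaranteed by the assumed exponential relaxation of $L_0$.
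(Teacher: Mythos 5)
Your proposal is correct, and it reproduces the paper's starting point: the first equality is just~\eqref{eq: G-F}, and the It\^o--Stratonovich conversion turning each line integral into $\int_0^T w_1\,\id t$ plus a martingale is exactly the paper's equation~\eqref{itos}. The martingale--martingale piece (It\^o isometry, giving $\langle G_1\cdot D F_1\rangle^0$) and the drift--drift piece (stationary two-time correlation $\langle w_1^G e^{|t-s|L_0} w_1^F\rangle^0$ integrated to $-2\langle w_1^G L_0^{-1}w_1^F\rangle^0$ per unit time) are also handled the same way. Where you genuinely diverge is in the cross terms, which you correctly single out as the delicate step. The paper disposes of them by a time-reversal trick: the martingale property truncates $\int_0^T\!\cdots\id B_s$ to $\int_0^t$, and then time-reversal invariance of the equilibrium process together with the antisymmetry of the Stratonovich integral converts the mixed martingale--drift correlation into $-2\langle V_T^G V_T^F\rangle^0$, so no new object has to be computed. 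You instead solve the Poisson equation $L_0\phi=w_1^F$, write $V_T^F$ as a boundary term plus a martingale (the Kipnis--Varadhan corrector decomposition), and evaluate the resulting martingale cross-covariance by It\^o isometry and integration by parts on the torus; the identity $\nabla\cdot(\rho_0\chi G_1)=\be\rho_0 w_1^G$ then delivers $+2\langle w_1^G L_0^{-1}w_1^F\rangle^0$ per unit time for each mixed piece, and your bookkeeping ($\tfrac{2}{\be}\langle G_1\cdot\chi F_1\rangle^0 -2\langle w_1^GL_0^{-1}w_1^F\rangle^0+4\langle w_1^GL_0^{-1}w_1^F\rangle^0$, times $\be/2$) lands on the required expression. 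Your route is the more standard central-limit-theorem machinery and makes the role of the spectral gap (boundedness of $\phi$, so the boundary terms are $o(T)$ after Cauchy--Schwarz) completely explicit; the paper's route is slicker in that it never introduces the corrector and needs only the reversibility of $\caP_0$. Both are sound.
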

\begin{remark}
The middle term in~\eqref{eq: green-kubo} is what follows from
applying the McLennan formula to the mean current
close-to-equilibrium.  The equality~\eqref{eq: green-kubo} then
yields the linear response formula for the close-to-equilibrium
stationary current in terms of current-current time correlations,
its right-hand side. The result can be formally summarized as
saying that
\begin{equation}
  j_\rho^F(x) = \int \caR(x,y)\,F(y)\,\id y + o(\ve)
\end{equation}
with a symmetric response function $\caR(x,y) = \caR(y,x)$ given
by
\begin{equation}
  \caR(x,y) = \lim_{T \uparrow +\infty} \frac{\be}{2T}
  \bigl\langle J_T(x)\, J_T(y) \Bigr\rangle^0
\end{equation}
and
\begin{equation}
  J_T(x) = \int_0^T \de(x_t - x) \circ \id x_t
\end{equation}
is the time-integrated empirical current density.
\end{remark}
\begin{proof}
As the first equality in~\eqref{eq: green-kubo} is
formula~\eqref{eq: G-F}, we only need to prove the second equality
there.

Using the standard relation between the Stratonovich and the It\^o
integrals, each integration along the equilibrium process
(corresponding to $F = 0$ in~\eqref{gsd1}) on the right-hand side
of~\eqref{eq: G-F} can be computed as follows:
\begin{equation}\label{itos}
\begin{split}
  \int_0^T F_1 \circ \id x_t &=
  \int_0^T F_1 \cdot \id x_t
  + \int_0^T \sum_{kl} D_{kl} \frac{\partial F_{1,l}}{\partial x_k}\,\id t
\\
  &= \int_0^T [ -F_1 \cdot \chi \nabla U +
  \nabla \cdot (D F_1) ]\,\id t +
  \int_0^T F_1 \cdot (2D)^{1/2} \id B_t
\\
  &= \int_0^T w_1^F \id t +
  \int_0^T F_1 \cdot (2D)^{1/2} \id B_t
\end{split}
\end{equation}
We must substitute \eqref{itos} into the right-hand side of
\eqref{eq: green-kubo}.  We start with the cross terms, one of
which is
\begin{equation}
\begin{split}
  \Bigl\langle \int_0^T w_1^G\,\id t &\int_0^T F_1 \cdot
  (2D)^{1/2}\id B_s \Bigr\rangle =
  \Bigl\langle \int_0^T w_1^G\,\id t \int_0^t F_1 \cdot
  (2D)^{1/2}\id B_s \Bigr\rangle^0
\\
  &= -\Bigl\langle \int_0^T w_1^G\,\id t
  \Bigl[\int_t^T F_1\cdot (2D)^{1/2} \id B_s + 2\int_t^T w_1^F\,\id s\Bigr]
  \Bigr\rangle^0
\\
  &= -2\Bigl\langle \int_0^T w_1^G\,\id t
  \int_t^T w_1^F\,\id s \Bigr\rangle^0
\end{split}
\end{equation}
In the second equality we have applied the time-reversal of the
last equality in \eqref{itos}, using that the
Stratonovich-integral is time-antisymmetric and that the
equilibrium process is time-reversal symmetric. Analogously,
\begin{equation}
\begin{split}
  \Bigl\langle \int_0^T w_1^F\,\id t &\int_0^T G_1 \cdot
  (2D)^{1/2}\id B_s \Bigr\rangle^0 =
  -2\Bigr\langle \int_0^T w_1^F\,\id t
  \int_t^T w_1^G\,\id s \Bigr\rangle^0
\\
  &= -2\Bigr\langle \int_0^T w_1^G\,\id t
  \int_0^t w_1^F\,\id s \Bigr\rangle^0
\end{split}
\end{equation}
Hence, both cross-terms together give
\begin{equation}
  -2\Bigl\langle \int_0^T w_1^G\,\id t
  \int_0^T w_1^F\,\id s \Bigr\rangle^0
\end{equation}
As a consequence, the correlation is then
\begin{equation*}
\begin{split}
  \Bigl\langle \int_0^T G_1(x_t) &\circ \id x_t
  \int_0^T F_1(x_s) \circ \id x_s \Bigr\rangle^0
\\
  &= 2\Bigl\langle \int_0^T G_1 \cdot D F_1(x_t)\,\id t \Bigr\rangle^0
  - \Bigl\langle \int_0^T w_1^G(x_t)\,\id t \int_0^T w_1^F(x_s)\,\id s
  \Bigr\rangle^0
\\
  &= 2T \langle G_1 \cdot D F_1 \rangle^0
  -  \int_{\al T}^{(1-\al)T} \id t \int_{-t}^{T-t} \id s\,
  \langle w_1^G(x_t)\,w_1^F(x_s) \rangle^0 + O(\al T)
\\
  &= 2T \langle G_1 \cdot D F_1 \rangle^0
  - (1 - 2\al)T \int_{-\infty}^{+\infty}
  \id s\,\langle w_1^G(x_0)\,w_1^F(x_s) \rangle^0 + O(\al T) +
  O(T\,e^{-\ka \al T})
\end{split}
\end{equation*}
for an arbitrary $0 < \al < 1/2$ and with $\ka$ the rate of
exponential decay of the time correlations. Dividing by $2T$ and
taking the limits $T \uparrow +\infty$ and then $\al \downarrow
0$, one finally obtains
\begin{equation}
  \lim_{T \uparrow +\infty} \frac{1}{2T}
  \Bigl\langle \int_0^T G_1(x_t) \circ \id x_t
  \int_0^T F_1(x_s) \circ \id x_s \Bigr\rangle^0
  = \langle G_1 \cdot D F_1 \rangle^0
  - \int_0^{+\infty} \id s \, \langle w_1^G\,e^{s L_0} w_1^F \rangle^0
\end{equation}
\end{proof}

\begin{remark}
The same proof applies for jump processes.  One then has the
analogues $w_1^F=\sum_y \lambda_0(x,y) F(x,y),\;F(x,y) = -
F(y,x)$, and   $\frj_1^F = \gamma(x,y)\,[F(x,y) + h^F(x) -
h^F(y)],\; \gamma(x,y) =\rho_0(x)\lambda_0(x,y)=\gamma(y,x),
h^F(x) = L_0^{-1} w^F_1 (x),\; \rho^F(x) = \rho_0(x) [1 +
h^F(x)]$. Theorem~\ref{them: green-kubo} applies equally under the
condition of uniform exponential relaxation.
\end{remark}

\subsection{Even and odd variables: example}\label{odd}

For models whose configurations do not transform trivially under
time reversal, the above construction requires a generalization.
If the involution $\pi$, $\pi^2 = \funit$, is the kinematical
time-reversal on the state space, the detailed balance condition
takes the generalized form
\begin{equation}\label{eq: db-gen}
  L_0^+ = \pi L_0 \pi,\qquad \rho_0\pi = \rho_0
\end{equation}
for the adjoint $L_0^+$ in the sense $\int \bar{f} (L_0
g)\,\rho_0\id x = \int \overline{(L_0^+ f)} g\, \rho_0\id x$.
Typical examples are dissipative mechanical systems, e.g.~underdamped diffusion processes, with states given by both
position and momentum variables for which $\pi$ turns the sign of
the momentum. Models of heat conduction are a prime example, and
the close-to-equilibrium analysis would imply Fourier's law --- if
indeed the equilibrium time correlation functions can be
controlled in the Green-Kubo formula, which remains highly
non-trivial, see e.g.~\cite{fou,ban}.

Since the McLennan form is fundamentally a consequence of the transient fluctuation symmetry, see Section~\ref{der}, the above arguments need only minor changes. Instead of repeating the whole derivation, we restrict ourselves to a simple example that elucidates essential points.\\

We consider the model of a linear RLC-circuit with two resistors
in series and one external voltage ($E$);  cf.~Section~3.3
in~\cite{BMN}. The two independent free variables are the
potential $U$ (even) over the first resistance $R_1$ and the
current $I$ (odd) through the second resistance $R_2$. To have a
nontrivial transient regime, an inductance $L$ is added in series
with the resistors and also a capacitance $C$ is connected in
parallel with the first resistor.  The environment is at inverse
temperature $\beta$.

The stochastic dynamics for $(U,I)$ is given by Kirchoff's laws
combined with the Johnson-Nyquist theory according to which
resistances $R_{1,2}$ immersed in a thermal environment are source
of an extra random voltage; these can be modeled as independent
Brownian motions with variance $R_{1,2} \be^{-1}$ per unit time.
Altogether,
\begin{equation}\label{rrcllang}
\begin{split}
  C\id U_t &= \Bigl( I_t - \frac{U_t}{R_1} \Bigr)\,\id t +
  (\be R_1)^{-1/2} \,\id B_{1,t}
\\
  L \id I_t &= (E - R_2 I_t - U_t)\,\id t + (\be R_2)^{-1/2} \,\id B_{2,t}
\end{split}
\end{equation}
The reference equilibrium dynamics corresponds to $E=0$, for which
the stationary density reads $\rho_0(U,I) \propto \exp \{-(C U^2 +
L I^2)/2 \} $ and the detailed balance condition~\eqref{eq:
db-gen} is verified with $\pi(U,I) = (U,-I)$.
 In the driven case, $E \neq 0$,
  and by the linearity of the example it is again easy to compute the stationary density
   $\rho_E$.  Instead we use this example to illustrate and to
    verify the McLennan proposal~\eqref{mac}.\\

The goal is to get $\rho_E$ from physically identifying the
transient entropy flux. The irreversible part of the dissipation
here equals $\be$ times the work done by the battery as a function
of the trajectory $\om = \{(U_t,I_t), t\in [0,T] \}$.
Specifically,
\begin{equation}\label{sira}
  S_{\irr}^T(\omega) = \be E \int_0^T I_t\,\id t
\end{equation}
Of course, when in doubt, one can compute it also by the general
algorithm as the time-antisymmetric part of the logarithmic
density of the path-measure with respect to its time-reversal,
cf.~\eqref{girs}--\eqref{w1}. By the linearity and homogeneity of
the system, the source voltage $E$ takes over the driving
parameter $\ve$ of the previous analysis and, using the notation
from there, the expected irreversible part of the dissipation is
\begin{equation}
  \langle S_{\irr}^T \rangle_\mu =
  E \int_0^T \langle w_1(I_t) \rangle_\mu^0\,\id t,\qquad
  w_1 = \be I
\end{equation}
cf.~\eqref{w10}--\eqref{w1} or \eqref{work}--\eqref{workw}. It is
now easy to check that the stationary density satisfies the
identities
\begin{equation}
\begin{split}
  \log\frac{\rho_E}{\rho_0}(U,I) &=
  -E \Bigl\langle \int_0^{+\infty} w_1(x_t)\,\id t \Bigr\rangle_{U,I}^{0,+}
  + O(E^2)
\\
  &= \be E (L_0^+)^{-1} I + O(E^2)
\\
  &= \be E \Bigl(\frac{R_1 C}{R_1 + R_2} \,U +
  \frac{L}{R_1 + R_2}\,I \Bigr) + O(E^2)
\end{split}
\end{equation}
where the equilibrium dynamics on the first line runs according to
the generator $L_0^+$, i.e., backwards in time. Since it is
identical to $\pi L_0 \pi$, it only generates sign changes in the
computation---that is the only point in which the previous
analysis must be modified.  The result of the McLennan theory thus
gives $\rho_E$ correctly as the Gaussian density with the same
covariance matrix
 as $\rho_0$ but with nonzero averages
\begin{equation}\label{gaus}
  \langle I \rangle = \frac{E}{R_1 + R_2}\, \qquad
  \langle U \rangle = \frac{R_1 E}{R_1 + R_2}
\end{equation}

The point is that even in cases where computations would be more
involved, the McLennan formula is in terms of a physical quantity
that can often be written down without the need to go much into
further details of the model.

\section{Generalization beyond close-to-equilibrium}\label{max}

It is natural to ask whether similar representations of the
stationary distribution remain valid also beyond
close-to-equilibrium.  The correction to quadratic order has been
systematically explored in \cite{KN,sa,ha}, within the programme
of steady state thermodynamics. Here we add a general expression,
\eqref{dent} below, from which a cumulant expansion around equilibrium could be started in principle.\\

We look back at \eqref{girs}--\eqref{girsan} that we now write as
\[
\id\cal{P}_\mu(\omega) = \id\cal{P}_\mu^0(\omega)\,e^{-A(\omega)}
\]
where the equilibrium reference ${\cal P}_\mu^0$ is starting from
the law $\mu$. We decompose the action $A$ into a
time-antisymmetric and a time-symmetric part:
\[
S = A\theta - A,\qquad {\cal T} = A\theta + A
\]
where we have abbreviated \eqref{sir} to $S$.  From
\eqref{girsan}, the time-symmetric part ${\cal T}$ is the
time-integrated excess in escape rates for jump processes.  It is
more generally related to the dynamical activity in the process.
We have also called it traffic in the context of  dynamical
fluctuation theory, \cite{MN,mnw}. We thus have
\begin{equation}\label{gnet}
  \id\cal{P}_\mu = \id\cal{P}_\mu^0\,\exp
  \Bigl( \frac{S- \caT}{2} \Bigr),\quad
  \id\cal{P}_{\rho_0}\circ\theta =
  \id\cal{P}_{0}\,\exp \Bigl( -\frac{S+ \caT}{2} \Bigr)
\end{equation}
where the second identity uses the time-reversal invariance of the
equilibrium process (started at $\rho_0$). The integrated form of
the second identity reads
\[
  \langle f(\omega) \rangle_{\rho_0} =
  \langle f(\theta \omega)\,e^{- \frac{1}{2}(S(\omega) + {\caT}(\omega))}\rangle^0
\]
with the right-most expectation over the equilibrium process.  The
left expectation is for the nonequilibrium process starting in
$\rho_0$ so that, taking functions $f(\omega) = f(\omega_t)$ of
the state at a single time $t$, we get information about the
approach to the nonequilibrium stationary density.  For the finite
state space $\Omega$, taking $f(\omega) = \delta_{x_T = x}$, we
get
\begin{equation}\label{dent}
  \text{Prob}(x_T = x) = \rho^\ve_T(x) =
  \rho_0(x)\,\langle e^{-\frac{1}{2} (S+{\caT})} \rangle_x^0,\quad
  x\in \Omega
\end{equation}
which is an exact formula for the nonequilibrium density at time
$T$, no matter
how far from equilibrium, entirely in terms of the reference
equilibrium dynamics starting at $x$.  Recall that the exponent in
the average is extensive in (large) time $T$ while also of (small)
order $\varepsilon$ around equilibrium. Together with the
normalization condition
\begin{equation}\label{norm}
  \langle e^{\frac{1}{2}(S - \caT)} \rangle^0_\mu = 1
\end{equation}
for all initial laws $\mu$, as follows from the first identity in
\eqref{gnet}, this can be taken as the starting point for
systematic expansions. In the first order around equilibrium, the
expected entropy flux $S$ and the expected traffic ${\cal T}$ are
identical. That explains how~\eqref{fs} and hence  McLennan's
formula follow from~\eqref{dent}.
That finally is why we call~\eqref{dent} a  generalization, still
involving the irreversible entropy flux $S$ but now also the
traffic $\caT$. Note that all expectations
in~\eqref{dent}--\eqref{norm} are under the equilibrium process,
in contrast to a slightly different construction suggested
in~\cite{KN} which only uses the variable entropy production but
with respect to the full nonequilibrium dynamics.  We see that
difference, equilibrium versus nonequilibrium expectation, also
when comparing~\eqref{dent} with~\eqref{fs} where $S_{\irr}^T =
S$:
\[
 \langle e^{-S_{\irr}^T}\rangle_x =
\langle e^{-\frac{1}{2} (S+{\caT})} \rangle_x^0
\]
Yet, starting from either of the two expressions, there remains
the difficulty of writing down a general and physically meaningful
expression of the stationary distribution $\rho$ in a sufficiently
explicit way,  also because of generic nonlocal aspects in the
relation between potential and stationary density, given a
nonequilibrium driving, \cite{MN,mnw}.

\begin{remark}
The expression \eqref{dent} or the original McLennan formula
\eqref{mac} is perhaps related to what is argued to follow  from a
maximum entropy principle, cf. \cite{Dewar,evans}. Remark however
that finding the correct constraints or observables for which to
apply such a principle remains highly unclear.   In particular, it
appears that certainly beyond linear order around equilibrium,
also the time-symmetric fluctuation sector must get involved, cf.
\cite{maarten}.
\end{remark}

\section{Conclusions}
McLennan's formula gives a Gibbsian-like expression for the steady
law close-to-equilibrium under the condition of local detailed
balance.   The correction to equilibrium involves the transient
entropy flux.  It was seen before in \cite{KN} how that arises
from a transient fluctuation formula.  We have added mathematical
precision in the order of limits (time versus distance from
equilibrium).  We have shown how it relates to local equilibrium
and to the Green-Kubo relations and we have presented a
generalization involving the dynamical activity.

In our opinion it remains important to attempt a thermodynamic
interpretation of also higher order corrections to equilibrium.

\begin{acknowledgements}
The authors thank T.~S.~Komatsu, N.~Nakagawa, S.~Sasa, and
H.~Tasaki for very fruitful discussions. K.N.\ acknowledges the
support from the Grant Agency of the Czech Republic (Grant
no.~202/07/0404). C.M.\ benefits from the Belgian Interuniversity
Attraction Poles Programme P6/02.
\end{acknowledgements}

\end{document}